\newcommand{\smallfrac}[2]{\mbox{$\frac{#1}{#2}$}}
\newcommand{\half}{\smallfrac{1}{2}}
\newcommand{\bra}[1]{\left\langle{#1}\right|}
\newcommand{\ket}[1]{\left|{#1}\right\rangle}
\newcommand{\op}[2]{\ket{#1}\!\bra{#2}}
\newcommand{\ip}[2]{\left\langle{#1}\right|\left.{#2}\right\rangle}
\newcommand\ipo[3]{\langle#1 |#2| #3 \rangle}
\newcommand{\En}[1]{\mathbb{E}\left [{#1}\right]}
\newcommand{\sE}{\mathcal{E}}
\newcommand{\sF}{\mathcal{F}}
\newcommand{\sG}{\mathcal{G}}
\newcommand{\MSE}{\mbox{MSE}}
\newcommand{\Tr}[1]{\mbox{\rm Tr}\!\left ( {#1}\right )}
\newcommand{\erf}[1]{Eq.~(\ref{#1})}
\newcommand{\frf}[1]{Fig.~\ref{#1}}
\newcommand{\srf}[1]{Sec.~\ref{#1}}
\newcommand\dg{^\dagger}
\newcommand{\ssQ}{{\scriptscriptstyle{Q}}}
\newcommand{\ssA}{{\scriptscriptstyle{A}}}
\newcommand{\dingcross}{{\mbox{\ding{53}}}}
\newcommand{\dingcrossscript}{{\mbox{\scriptsize\ding{53}}}}
\newcommand\Id{\mathbb{I}}
\newcommand{\ntrue}{\bm{n}_{\rm true}}
\newcommand{\nest}{\bm{n}_{\rm est}}
\newtheorem*{thrm}{Theorem}
\newtheorem*{lem}{Lemma}
\begin{document}
\title{Quantum limits on post-selected, probabilistic quantum metrology}

\author{Joshua Combes}
\author{Christopher Ferrie}
\author{Zhang Jiang}
\author{Carlton M.~Caves}

\affiliation{Center for Quantum Information and Control, University of New Mexico, Albuquerque, New Mexico, 87131-0001}
\pacs{03.67.Ac,06.20.-f,03.65.Ta,02.50.-r}

\begin{abstract}
Probabilistic metrology attempts to improve parameter estimation by occasionally reporting an excellent estimate and the rest of the time either guessing or doing nothing at all. Here we show that probabilistic metrology can never improve quantum limits on estimation of a single parameter, both on average and asymptotically in number of trials, if performance is judged relative to mean-square estimation error.  We extend the result by showing that for a finite number of trials, the probability of obtaining better estimates using probabilistic metrology, as measured by mean-square error, decreases exponentially with the number of trials.  To be tight, the performance bounds we derive require that likelihood functions be approximately normal, which in turn depends on how rapidly specific distributions converge to a normal distribution with number of trials.
\end{abstract}

 \date{\today}

\maketitle

\section{Introduction}\label{sec:intro}
Quantum metrology is the study of how accurately physical quantities can be measured within the structure of quantum theory.  Quantum-limited metrology is important for such objectives as improving time and frequency standards~\cite{HinShePhil13,UdeHolHan02} and detecting gravitational waves~\cite{LIGO}.  The basic scenario of quantum metrology is that of a quantum system whose state is influenced by one or more parameters that are to be determined by probing the system with other physical systems.  The goal is to tailor the measurement made by the probes so as to learn as much as possible about the values of the parameters.  Researchers have been busy devising new ways to estimate parameters~\cite{HigBerBar07} and devising new bounds on how accurately one can estimate a parameter~\cite{BoiFlaCav07,TsaWisCav11,EscMatDav11,KnySmeDur11,HalWis12,Tsang12,GioLloMac12}.

Recently some researchers have proposed ingenious techniques that appear to allow for an improvement in estimation precision beyond the limits quantum mechanics usually imposes.  These techniques go under the names of ``probabilistic metrology"~\cite{Fiurasek06,ChiYanYao2013,Marek2013}, ``metrology with abstention''~\cite{GenRonCal2012,GenRonCal2013a,GenRonCal2013b}, and ``weak-value amplification''~\cite{KofAshNor12,DreMalFil13,JorMarHow13}.  Before examine probabilistic protocols for estimation, we briefly summarize some ways probabilistic protocols are used in quantum information.

Probabilistic protocols have a long history in quantum measurement and information theory and have proven to be very useful in some contexts.  For example, in quantum optics, single-photon states can be probabilistically created by heralding on one photon from a pair created by spontaneous parametric downconversion~\cite{HonMan86}.  The low success probabilities seen in experiments are not fundamental and can be increased by multiplexing~\cite{ColXioRey13}.  Similarly, in the Knill-Laflamme-Milburn quantum-optical computing scheme, the probabilistic gates can have arbitrarily high success rates if special resource states can be prepared offline~\cite{KLM01}.

Using unambiguous state discrimination~(USD)~\cite{Che98,CheBar98}, one can discriminate without error between linearly independent pure states, provided there is the possibility that some of the time one is not required to make a decision among the states.  USD is interesting in its own right, as is probabilistic metrology, but to assess the usefulness of USD for some quantum-information task, one must formulate a performance metric that weighs the trade-off between never making a mistake and sometimes not making a decision.  In metrological contexts, one doesn't have to formulate a performance metric; there typically is a natural  performance metric related to a measure of the accuracy with which the parameters are estimated.  The aim of this paper is to assess probabilistic metrology relative to such a natural performance metric.  Our focus in this paper is not whether protocols for probabilistic quantum metrology are intrinsically interesting, but rather whether they are useful for reaching or beating quantum limits.

The strategy employed by probabilistic metrology schemes is to make a selection measurement that ``concentrates'' information about the parameters into some subset of the measurement outcomes.  Further measurements to determine the parameters, made on these favorable outcomes, provide a refined estimates of the parameters; unfavorable outcomes are discarded with no attempt to gather information from them.  The process of waiting for a favorable outcome is called post-selection\/; it amounts to using the selection measurement to prepare states that, on favorable outcomes, provide high sensitivity to the parameters. In prior work, it is not at all clear if the process of post-selection can aid overall estimation accuracy, primarily because the relative scarcity of the favorable outcomes and/or the failure to garner any information from the discarded outcomes is not fully included in the analysis.

In broad terms, the aim of probabilistic metrology is to use the selection measurement to {\em increase\/} the distinguishability of quantum states.  In this paper, we explore whether this strategy can provide benefits for quantum metrology once the probability of favorable outcomes is properly taken into account.  We focus on estimation of a single parameter~$x$ and show that if the performance of a strategy for estimating $x$ is judged by the mean-square estimation error (MSE), post-selection can never improve on quantum limits for estimating~$x$.  The framework for our analysis is laid out in Secs.~\ref{sec:parest} and~\ref{sec:ancilla}, and the main results are given in Secs.~\ref{sec:results_1} and~\ref{sec:results_2}.

In Sec.~\ref{sec:prior}, we survey previous critical analyses of probabilistic metrology (in Sec.~\ref{subsec:sketchprior}), formulate desiderata for analyzing any protocol for probabilistic metrology (Sec.~\ref{subsec:desiderata}), discuss proposals for metrology using weak-value amplification, to which our results apply, but which has also been analyzed critically elsewhere~\cite{TanYam2013,FerCom2013,Knee2013a,Knee2013b}~(Sec.~\ref{subsec:weakvalueamp}), and consider in some detail the protocol for metrology with abstention formulated in~\cite{GenRonCal2012}~(Sec.~\ref{subsec:abstention}).

\section{Quantum parameter estimation}\label{sec:parest}
We are interested in estimating a parameter $x$ that is impressed on the system state through some general quantum evolution $\rho(x) = \sE_x(\rho)$, where $\sE_x$ is a trace-preserving completely positive map that depends on $x$.  The parameter could correspond, for example, to a unitary phase shift or to the decay constant of an atom.  In order to compare protocols fairly, we need a way to characterize how well an estimation procedure performs.  We compare protocols based on their mean-square estimation error (MSE),
\begin{equation}
\MSE(x_{\rm true})=\mathbb E_{\rm data}\!\left[\bigl(\hat x({\rm data})-x_{\rm true}\bigr)^2\right],
\end{equation}
where $\hat x$ is an estimator for $x_{\rm true}$ and $\mathbb E_{\rm data}[\,\cdot\,]$ denotes an expectation over data.

For quantum estimation problems there is a strict lower bound on the MSE obtained from any unbiased estimator applied to data collected from any quantum measurement.  The bound is expressed in terms of the quantum Fisher information $I_\rho(x)$ associated with the state $\rho(x)$ that encodes the parameter.  For measurements on $N$ copies of the system---we call these $N$ trials---the bound, called the {\em quantum Cram\'er-Rao bound\/} (QCRB), is expressed as
\begin{equation}\label{eq:Fisher-bound}
{\rm MSE}(x) \geq \frac{1}{N I_\rho(x)}.
\end{equation}
The quantum Fisher information is defined as~\cite{Helstrom1976Quantum}
\begin{equation}\label{eq:Irhox}
I_\rho(x) = \Tr{\rho(x)L(x)^2},
\end{equation}
where $L(x)$ is the \emph{symmetric logarithmic derivative\/} (SLD), implicitly defined by the equation
\begin{equation}\label{SLD}
\frac{\partial}{\partial x}\rho(x) = \frac12 \big[ L(x)\rho(x) + \rho(x)L(x) \big].
\end{equation}

It is known that there is an optimal quantum measurement whose \emph{classical\/} Fisher information, calculated from the probabilities of measurement outcomes, achieves the \emph{quantum\/} Fisher information~(\ref{eq:Irhox})~\cite{BraunsteinCaves1994}; letting $\{\Upsilon_k\}$  be any positive-operator-valued measure (POVM), with outcome probabilities $p(k|x)=\Tr{\rho(x)\Upsilon_k}$,
this can be written as
\begin{equation}\label{classical to quantum fisher}
I_\rho(x) = \max_{\{\Upsilon_k\}}\,I_{\rm cl}\big[p(k|x)\big],
\end{equation}
where the classical Fisher information of the outcome probabilities is
\begin{equation}
I_{\rm cl}\big[p(k|x)\big]=\sum_k p(k|x)\!\left(\frac{\partial\ln p(k|x)}{\partial x}\right)^2.
\end{equation}
Equation~(\ref{classical to quantum fisher}) is an alternative, operational definition of the quantum Fisher information.  An additional important fact is that the maximum likelihood estimator generally achieves the classical Fisher bound asymptotically in $N$, and this, together with \erf{classical to quantum fisher}, means that the QCRB does indeed express the quantum limit on achievable MSE.  To achieve the QCRB often requires prior information about the parameter and, in practice, an adaptive implementation of the optimal measurement.

Before moving on to our detailed analysis, we draw attention to one important point.  The inequalities we derive are tight only when the relevant likelihood function is approximately normal.  Thus the rate of convergence with $N$ of a particular likelihood function to the normal distribution determines when the following bounds are tight.  For measurements with Gaussian statistics, the inequalities are saturated immediately, for one trial. Interestingly, distributions that are quite different from the normal distribution rapidly become approximately normal.  For example (see Chap.~2 of~\cite{BoxHunHun05}), ``a rough rule'' is that for $N>5$, the normal approximation to a binomial distribution is good if $|(\sqrt{(1-p)/p}-\sqrt{p/(1-p)})/\sqrt N|<0.3$, where $p$ is the Bernoulli or binomial success probability.  In practice, provided $p$ is not too close to 0 or 1, this means that $20\alt N\alt 100$ is sufficient to approximate a normal distribution.  As quantum statistics are often Bernoulli or binomial, we can expect our results to hold very closely in as few as $20$--$100$ trials.  In typical quantum-metrology applications~\cite{SerChaCom11}, the likelihood functions are highly nonGaussian and oscillatory, yet the convergence to a normal distribution still occurs approximately after as little as $15$ trials~(see Sec.~6 of~\cite{FerGraCor13}).  Braunstein~\cite{Braun92} has explored the question of approach to Gaussian statistics in the context of the maximum likelihood estimation that can achieve the Cram\'er-Rao bound.

\section{Probabilistic quantum metrology and ancilla model}\label{sec:ancilla}
Rather than using the optimal POVM which solves Eq.~(\ref{classical to quantum fisher}), the idea of probabilistic metrology is to ``encode'' the information about the unknown parameter ``in a more efficient way''~\cite{GenRonCal2012,GenRonCal2013a,GenRonCal2013b,ChiYanYao2013,Marek2013,DreMalFil13,JorMarHow13}.  Formally, one makes a selection measurement whose outcomes are divided into the set $\checkmark$ of favorable outcomes, which ``concentrate'' the information about the parameter, and the complementary set $\dingcross$ of unfavorable outcomes, which are discarded.  When a favorable outcome is obtained, a second measurement is performed on the post-selected state to extract information about $x$.

To investigate these ideas, we compare the quantum Fisher information before and after the selection measurement.  This is different from the analysis performed in~\cite{TanYam2013,FerCom2013}, where there were additional assumptions about how the parameter was encoded in the state $\rho(x)$ and about what types of selection measurement could be performed.  Here we use the most general forms allowed by quantum mechanics.

The quantum state that encodes the classical parameter $x$ is denoted by $\rho_{\ssQ}(x) = \sE_x(\rho)$, where we now label the system with $Q$ to distinguish it from the ancilla we introduce shortly.  The system can consist of more than one part: in weak-value amplification, for example, the system $Q$ is divided into two parts, $R$ and $S$, the parameter is the strength of an interaction between $R$ and $S$, and the selection measurement is performed on $R$ alone~\cite{FerCom2013}. Notice also that since we allow the encoded state $\rho_\ssQ(x)$ to be mixed, our analysis covers the case of technical noise that is imposed on the system as the parameter is encoded in the system.  We can simply regard the operation $\sE_x$ as incorporating a description of such technical~noise.

The selection measurement is described by quantum operations, one for each outcome~$\alpha$,
\begin{align}\label{eq:uncond_op}
\sF_\alpha\!\left[\rho_{\ssQ}(x)\right] =
\sum_{j=0}^{J_\alpha-1}M_{\alpha,j} \,\rho_{\ssQ}(x)\, M_{\alpha,j}\dg,
\end{align}
where the operators $M_{\alpha,j}$ are Kraus operators.  The POVM element for outcome~$\alpha$ and the completeness relation satisfied by the POVM elements are
\begin{equation}
E_\alpha= \sum_{j=0}^{J_\alpha-1}M_{\alpha,j}\dg M_{\alpha,j},
\qquad
\Id_\ssQ=\sum_{\alpha=1}^J E_\alpha.
\end{equation}
The subscript $j$ allows for the possibility that the quantum operations involve course graining over measurement results we don't have access to.  Our subsequent analysis requires us to be clear about the values assumed by $\alpha$ and $j$: the outcomes $\alpha$ are labeled by positive integers, $1,\ldots,J$, and the index $j$, when associated with outcome $\alpha$, takes on values $j=0,1,\ldots,J_\alpha-1$.

The post-measurement state, post-selected on outcome $\alpha$, is
\begin{align}\label{eq:cond_op}
\sigma_{\ssQ|\alpha}(x) = \frac{\sF_\alpha[\rho_{\ssQ}(x)]}{p(\alpha|x)},
\end{align}
where $p(\alpha|x) =\Tr{\sF_\alpha[\rho_{\ssQ}(x)]} = \Tr{E_\alpha \rho_{\ssQ}(x)}$ is the conditional probability of obtaining outcome $\alpha$ given the state $\rho_{\ssQ}(x)$.

The idea behind probabilistic quantum metrology is that the states for favorable outcomes ($\alpha\in\checkmark$) have higher Fisher information than $\rho_\ssQ(x)$.  The QCRB arises, however, from the quantum Fisher information of an individual quantum state and not directly from an average of the quantum Fisher informations for states occurring with various probabilities.  Thus, to formalize the idea of probabilistic metrology for analysis using the QCRB, we need to formulate it in terms of a single quantum state.  For this purpose, we employ an ancilla $A$ that records and stores the outcomes of the selection measurement.  The joint system-ancilla state we are shooting for is
\begin{align}\label{sigmanocheck}
\sigma_{\ssQ\ssA}(x)=\sum_{\alpha=1}^J p(\alpha|x)\sigma_{\ssQ|\alpha}(x)\otimes\op{f_\alpha}{f_\alpha},
\end{align}
where the states $\ket{f_\alpha}$ are orthonormal ancilla states.  In the state~(\ref{sigmanocheck}) the ancilla stores a record of the selection-measurement outcomes; the outcomes are correlated with the post-selected system states $\sigma_{\ssQ|\alpha}(x)$, which occur with probability $p(\alpha|x)$.

We now show how to get to the state~(\ref{sigmanocheck}) and to successor states that are relevant for probabilistic metrology by physical processes; this demonstration illuminates how information is discarded at various points in these processes. To reiterate where we are headed, we are going to show in Sec.~\ref{sec:results_1} how getting to the state~(\ref{sigmanocheck}) and to the successor states envisioned by probabilistic metrology decreases the Fisher information available for parameter estimation.

We first invoke the Kraus representation theorem~\cite{Kraus83,MikeandIke}, which tells us that given the complete set of Kraus operators, $\{M_{\alpha,j}\}$, there exists an ancilla with initial pure state $\rho_\ssA=\op{\psi}{\psi}$ and a joint unitary operator $U$ such that $M_{\alpha,j}= \ipo{f_{\alpha,j}}{U}{\psi}$, where the states $\ket{f_{\alpha,j}}$ make up an orthonormal basis for the ancilla.  The evolution under $U$ can be written as
\begin{equation}\label{eq:firstjoint}
U\rho_\ssQ(x)\otimes\rho_\ssA U^\dagger
=\sum_{\alpha,j;\beta,k}M_{\alpha,j}\rho_\ssQ(x)M\dg_{\beta,k}\otimes\op{f_{\alpha,j}}{f_{\beta,k}}.
\end{equation}
This joint unitary evolution does not store the measurement outcome in the ancilla.  Indeed, since the unitary $U$ can be reversed, the state~(\ref{eq:firstjoint}) has the same quantum Fisher information as $\rho_\ssQ(x)$.  To record and store the outcome in the ancilla requires some decoherence of the ancilla, which can be thought of as a measurement on the ancilla.

Na\"{\i}vely one might expect that performing the measurement specified by Kraus operators $\Omega_\alpha = \Id_\ssQ \otimes \sum_j \op{f_{\alpha,j}}{f_{\alpha,j}}$ would result in the state~(\ref{sigmanocheck}), but in fact, it gives the state
\begin{align}
\sum_\alpha\Omega_\alpha&U\rho_\ssQ(x)\otimes\rho_\ssA U^\dagger\Omega_\alpha\nonumber\\
&=\sum_{\alpha,j,k}M_{\alpha,j}\rho_\ssQ(x)M\dg_{\alpha,k}\otimes\op{f_{\alpha,j}}{f_{\alpha,k}}.
\end{align}
This measurement, though it removes the coherence between different outcome subspaces in the ancilla, leaves the coherence within each outcome subspace and thus does not reproduce the state~(\ref{sigmanocheck}).

To get to the state~(\ref{sigmanocheck}), we need to do a measurement on the ancilla in the basis $\{\ket{f_{\alpha,j}}\}$ and then keep only the outcome $\alpha$; erasing the suboutcome~$j$ can be done by a post-measurement unitary on the ancilla that, in each outcome subspace $\alpha$, leaves the ancilla in a particular state, which we take to be $\ket{f_{\alpha,0}}$.  The desired ancilla measurement is thus described by the quantum operation
\begin{align}\label{eq:sG}
\sG=\sum_{\alpha=1}^J\sum_{j=0}^{J_\alpha-1}K_{\alpha,j}  \odot K\dg_{\alpha,j},
\end{align}
where the Kraus operators are given by
\begin{equation}
K_{\alpha,j}= \Id_\ssQ\otimes\op{f_{\alpha,0}}{f_{\alpha,j}}.
\end{equation}
In \erf{eq:sG}, the symbol $\odot$ is a place holder for the operator the operation acts on.   Applying $\sG$ to the state~(\ref{eq:firstjoint}) gives the desired state~(\ref{sigmanocheck}),
\begin{align}\label{sigmanocheck2}
\sG\big[U\rho_\ssQ(x)\otimes\rho_\ssA U^\dagger\big]
&=\sum_{\alpha=1}^J\sF_\alpha[\rho_\ssQ(x)]\otimes\op{f_\alpha}{f_\alpha}\nonumber\\
&=\sigma_{\ssQ\ssA}(x),
\end{align}
where we identify $\ket{f_\alpha}=\ket{f_{\alpha,0}}$.  The quantum operation $\sG$ is a decoherence process that stores the selection-measurement outcomes in the ancilla, correlated with the post-selected system states $\sigma_{\ssQ|\alpha}(x)$.

Before proceeding further, a remark is in order.  Had we restricted our analysis to selection-measurement quantum operations each of which has a single Kraus operator, i.e., replaced \erf{eq:uncond_op} with $\sF_\alpha\left[\rho_{\ssQ}(x)\right] = M_{\alpha} \,\rho_{\ssQ}(x)\, M_{\alpha}\dg$, the analysis to this point would be considerably simplified at the cost of less generality.

Having gotten to the state~(\ref{sigmanocheck}), we now imagine a further conditional decoherence that for the unfavorable outcomes, damps the system to a state $\sigma_{\ssQ|\dingcrossscript}=\op{\phi_\dingcrossscript}{\phi_\dingcrossscript}=\sigma_{\ssQ|0}$, which contains no information about $x$ (zero Fisher information), and the ancilla to a state $\ket{f_\dingcrossscript}=\ket{f_0}$, which can be taken to be the state $\ket{f_\alpha}$ for a particular unfavorable outcome. The result is the joint state
\begin{align}\label{sigmacheck}
\sigma_{\ssQ\ssA,\checkmark}(x)
&=\sum_{\alpha\in\checkmark}p(\alpha|x)\sigma_{\ssQ|\alpha}(x)\otimes\op{f_\alpha}{f_\alpha}\nonumber\\
&\phantom{=\sum_{\alpha\in\checkmark}}
+p(\dingcross|x)\sigma_{\ssQ|\dingcrossscript}\otimes\op{f_\dingcrossscript}{f_\dingcrossscript}\\
&=\sum_{\alpha\in\{0,\checkmark\}} p_\checkmark(\alpha|x)\sigma_{\ssQ|\alpha}(x)\otimes\op{f_\alpha}{f_\alpha}.\nonumber
\end{align}
Here
\begin{equation}
p(\dingcross|x)=\sum_{\alpha\in\dingcrossscript}p(\alpha|x)
\end{equation}
is the total probability of the unfavorable outcomes.  To simplify our expressions we introduce, in the second form of \erf{sigmacheck}, the conditional distribution $p_\checkmark(\alpha|x)$.  This distribution is defined on the lumped unfavorable outcomes, which are labeled by $\alpha=\dingcross$ or $\alpha=0$ (we find both these labels to be useful), and the favorable outcomes by
\begin{equation}
p_\checkmark(\alpha|x)=
\begin{cases}
p(\dingcross|x),&\alpha=0,\\
p(\alpha|x),&\alpha\in\checkmark.
\end{cases}
\end{equation}

The state~(\ref{sigmacheck}) encodes the parameter $x$ in a way that is envisioned by probabilistic quantum metrology: the ancilla records the outcome of the selection measurement; the post-selected system states $\sigma_{\ssQ|\alpha}(x)$ for the favorable outcomes occur with the right probabilities $p(\alpha|x)$; and the unfavorable outcomes are lumped together and associated with a state $\sigma_{\ssQ|0}$ that has no information about~$x$.  All information about the parameter has clearly been removed from the unfavorable outcomes, but the state~(\ref{sigmacheck}) still allows a guess for the parameter when an unfavorable outcome occurs.

We consider one other state, which is perhaps the best expression of the strategy of probabilistic metrology.  This state arises from looking at the ancilla and, if the outcome is favorable, handing the resulting state to a party who performs the rest of the probabilistic-metrology protocol.  The probability of the hand-off is the total probability of the favorable outcomes,
\begin{align}
p(\checkmark|x)=\sum_{\alpha\in\checkmark}p(\alpha|x)=1-p(\dingcross|x),
\end{align}
and the state handed to the other party is
\begin{align}\label{sigmarealcheck}
\sigma_{\ssQ\ssA|\checkmark}(x)
&=\frac{\Pi_\checkmark\sigma_{\ssQ\ssA,\checkmark}(x)\Pi_\checkmark}{p(\checkmark|x)}\nonumber\\
&=\sum_{\alpha\in\checkmark}\frac{p_\checkmark(\alpha|x)}{p(\checkmark|x)}
\sigma_{\ssQ|\alpha}(x)\otimes\op{f_\alpha}{f_\alpha}.
\end{align}
Here $\Pi_\checkmark=\sum_{\alpha\in\checkmark}\op{f_\alpha}{f_\alpha}$ projects the ancilla onto the favorable outcomes.  With the state~(\ref{sigmarealcheck}), an estimate of the parameter is made only on the favorable outcomes.

\begin{figure*}[ht]
\includegraphics[width=0.95\linewidth]{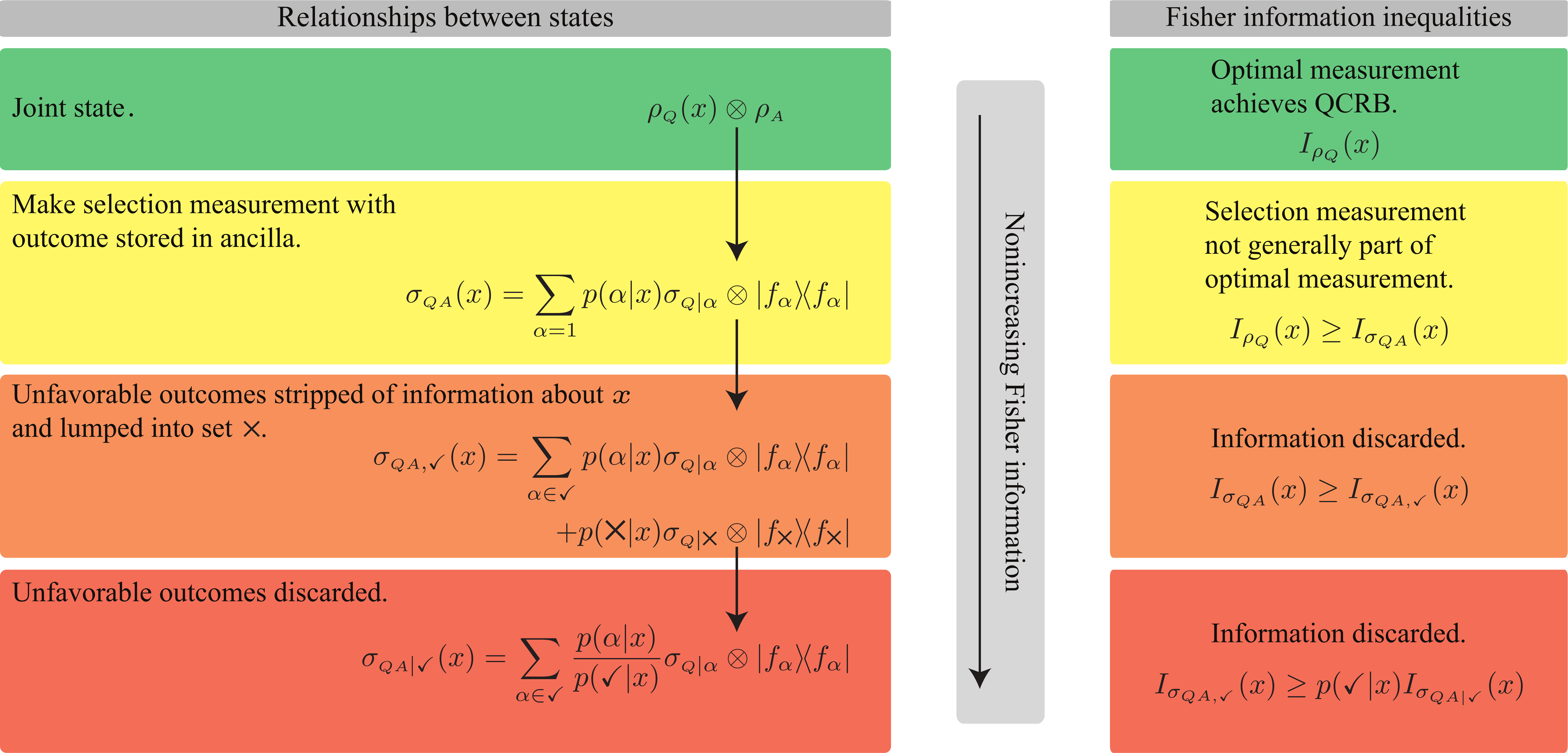}
\caption{\label{fig}
Summary of the relationships among the states in our analysis: the diagram should be read as a threat advisory, where the danger is decreased Fisher information.  Our main results are that the Fisher information cannot increase as one goes from the top of the diagram to the bottom; the decrease in going to the bottom state is statistical in a sense analyzed further in \srf{sec:results_2}.  The boxes in the right column give the inequalities between the Fisher informations and cite which of the principles enunciated in \srf{sec:results_1} encapsulates why the Fisher information decreases.
}
\end{figure*}

Our main results are concerned with the Fisher information for the joint states~(\ref{sigmanocheck}), (\ref{sigmacheck}), and~(\ref{sigmarealcheck}).  Notice that information about $x$ is thrown away in going from
the unconditional state $\sigma_{\ssQ\ssA}(x)$ to $\sigma_{\ssQ\ssA,\checkmark}(x)$ and again in going from $\sigma_{\ssQ\ssA,\checkmark}(x)$ to $\sigma_{\ssQ\ssA|\checkmark}(x)$.  These states and their relations are summarized in Fig.~\ref{fig}.

\section{Fisher-information inequalities}\label{sec:results_1}
To highlight our main results and to isolate the main technical manipulations in a proof, we style them as a Lemma and a Theorem.  The Lemma is concerned with the Fisher information for the states~(\ref{sigmanocheck}), (\ref{sigmacheck}), and~(\ref{sigmarealcheck}). We remind the reader that the QCRBs that come from the quantum Fisher informations below can generally only be achieved asymptotically in the number of trials, i.e., $N\gg1$.

\begin{lem}\label{lem}
The quantum Fisher information of the state~(\ref{sigmacheck}) is
\begin{align}\label{eq:Fisher1}
I_{\sigma_{QA,\checkmark}}\!(x)
=I_{\rm cl}[p_\checkmark(\alpha|x)]+\sum_{\alpha\in\checkmark}p(\alpha|x)I_{\sigma_{\ssQ|\alpha}}\!(x);
\end{align}
i.e., it is the sum of the classical Fisher information of the distribution $p_\checkmark(\alpha|x)$ and the average quantum Fisher information of the favorable-outcome states.  The quantum Fisher information of the state~(\ref{sigmarealcheck}) is related to the Fisher information~(\ref{eq:Fisher1}) by
\begin{equation}\label{eq:Fisher2}
I_{\sigma_{QA,\checkmark}}\!(x)
=p(\checkmark|x)I_{\sigma_{QA|\checkmark}}\!(x)+I_{\rm cl}[p(\checkmark|x),p(\dingcross|x)],
\end{equation}
where the final term is the classical Fisher information for the binary distribution of favorable vs.\ unfavorable outcomes.  When the favorable set includes all outcomes, Eq.~(\ref{eq:Fisher1}) becomes the Fisher information
of the unconditional state~(\ref{sigmanocheck}):
\begin{align}\label{eq:Fisher3}
I_{\sigma_{QA}}\!(x)
=I_{\rm cl}[p(\alpha|x)]+\sum_{\alpha=1}^J p(\alpha|x)I_{\sigma_{\ssQ|\alpha}}\!(x).
\end{align}
\end{lem}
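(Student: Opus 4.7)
The plan is to exploit the manifestly block-diagonal structure of the three joint states with respect to the orthonormal ancilla basis $\{\ket{f_\alpha}\}$. Each state has the form $\rho(x)=\sum_k p_k(x)\,\rho_k(x)\otimes\op{f_k}{f_k}$ for some indexing set, so the SLD defined by Eq.~(\ref{SLD}) is itself block-diagonal, and the quantum Fisher information splits into a sum over the ancilla blocks.

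First I would prove a small general identity: for any classically-correlated state $\rho(x)=\sum_k p_k(x)\,\rho_k(x)\otimes\op{f_k}{f_k}$, the SLD is $L(x)=\sum_k L_k(x)\otimes\op{f_k}{f_k}$ with $L_k(x)=(\partial_x\ln p_k(x))\Id+\tilde L_k(x)$, where $\tilde L_k(x)$ is the SLD of $\rho_k(x)$. This is checked by plugging into Eq.~(\ref{SLD}) and using $\partial_x[p_k\rho_k]=(\partial_x p_k)\rho_k+p_k\partial_x\rho_k$ together with the defining equation for $\tilde L_k$. Then from Eq.~(\ref{eq:Irhox}),
\begin{equation}
I_\rho(x)=\sum_k p_k\,\Tr{\rho_k L_k^2}
=\sum_k p_k\!\left[\left(\frac{\partial_x p_k}{p_k}\right)^{\!2}\!+\,2\,\frac{\partial_x p_k}{p_k}\Tr{\rho_k\tilde L_k}+\Tr{\rho_k\tilde L_k^2}\right].
\end{equation}
The cross term vanishes because $\Tr{\rho_k\tilde L_k}=\Tr{\partial_x\rho_k}=0$, leaving $I_\rho(x)=I_{\rm cl}[p_k(x)]+\sum_k p_k I_{\rho_k}(x)$.

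Applying this identity to $\sigma_{\ssQ\ssA,\checkmark}(x)$ with $k$ ranging over $\{0,\checkmark\}$, and using that the lumped-failure branch carries the state $\sigma_{\ssQ|0}$ with zero Fisher information by construction, yields Eq.~(\ref{eq:Fisher1}). Taking the favorable set to be all outcomes recovers Eq.~(\ref{eq:Fisher3}) immediately. For $\sigma_{\ssQ\ssA|\checkmark}(x)$, applying the identity with weights $p_\checkmark(\alpha|x)/p(\checkmark|x)$ for $\alpha\in\checkmark$ gives
\begin{equation}
I_{\sigma_{QA|\checkmark}}(x)=I_{\rm cl}\!\left[\frac{p(\alpha|x)}{p(\checkmark|x)}\right]+\sum_{\alpha\in\checkmark}\frac{p(\alpha|x)}{p(\checkmark|x)}\,I_{\sigma_{\ssQ|\alpha}}(x).
\end{equation}

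Finally, to obtain Eq.~(\ref{eq:Fisher2}), I would decompose the classical Fisher information in Eq.~(\ref{eq:Fisher1}) by the chain rule for the factorization $p_\checkmark(\alpha|x)=p(\checkmark|x)\cdot p(\alpha|x)/p(\checkmark|x)$ on the favorable branch. Writing $\partial_x\ln p_\checkmark(\alpha|x)=\partial_x\ln p(\checkmark|x)+\partial_x\ln[p(\alpha|x)/p(\checkmark|x)]$, expanding the square, and noting that the cross term vanishes because $\sum_{\alpha\in\checkmark}[p(\alpha|x)/p(\checkmark|x)]=1$, one finds $I_{\rm cl}[p_\checkmark(\alpha|x)]=I_{\rm cl}[p(\checkmark|x),p(\dingcross|x)]+p(\checkmark|x)\,I_{\rm cl}[p(\alpha|x)/p(\checkmark|x)]$. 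Substituting into Eq.~(\ref{eq:Fisher1}) and collecting the factors of $p(\checkmark|x)$ in front of the conditional Fisher information and the conditional averaged quantum Fisher informations yields Eq.~(\ref{eq:Fisher2}).

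The only nontrivial step is verifying the block-diagonal ansatz for the SLD, in particular that $L_k=(\partial_x\ln p_k)\Id+\tilde L_k$ is the correct pointwise solution; everything else is bookkeeping. I expect the main obstacle to be a clean justification that the ansatz is the \emph{unique} SLD on the support of $\rho(x)$, which is needed so that the Fisher information is unambiguously $\Tr{\rho L^2}$; this follows because Eq.~(\ref{SLD}) determines $L$ uniquely on $\mathrm{supp}\,\rho$, and the block-diagonal ansatz already satisfies it there. Outside the support the SLD is irrelevant for computing $I_\rho(x)$.
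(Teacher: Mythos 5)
Your proposal is correct and follows essentially the same route as the paper's proof: both construct the block-diagonal SLD whose $\alpha$-sector is $\partial_x\ln p_\checkmark(\alpha|x)+L_\alpha(x)$, kill the cross terms in $\Tr{\sigma L^2}$ via $\Tr{\sigma_{\ssQ|\alpha}L_\alpha}=\partial_x\Tr{\sigma_{\ssQ|\alpha}}=0$, and obtain Eq.~(\ref{eq:Fisher2}) by relating $I_{\rm cl}[p_\checkmark(\alpha|x)]$ to the renormalized distribution on the favorable outcomes. The differences are purely organizational---you package the sector-wise SLD computation as a reusable identity, and you spell out the chain-rule decomposition of the classical Fisher information that the paper dismisses as ``straightforward manipulation''---so no gap remains.
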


\begin{proof}
The proof is a mainly a straightforward derivation of the SLD for $\sigma_{\ssQ\ssA,\checkmark}(x)$.  We first take the derivative of $\sigma_{\ssQ\ssA,\checkmark}(x)$ with respect to the parameter $x$:
\begin{align}\label{generator}
\partial_x \sigma_{\ssQ\ssA,\checkmark}(x)=
&\sum_{\alpha\in\{0,\checkmark\}}
\big[\sigma_{\ssQ|\alpha}(x)\partial_x p_\checkmark(\alpha|x)\nonumber\\
&\quad+p_\checkmark(\alpha|x)\partial_x\sigma_{\ssQ|\alpha}(x)\big]\otimes\op{f_\alpha}{f_\alpha}.
\end{align}
The derivative $\partial_x\sigma_{\ssQ|\alpha}(x)$ defines the SLD of the states after the selection measurement via
\begin{align}
\partial_x\sigma_{\ssQ|\alpha}(x)=
\half[L_\alpha(x)\sigma_{\ssQ|\alpha}(x)+\sigma_{\ssQ|\alpha}(x)L_\alpha(x)],
\end{align}
and the derivative of the outcome probabilities can be written in terms of the usual classical logarithmic derivative,
$\partial_x p_\checkmark(\alpha|x)=p_\checkmark(\alpha|x)\partial_x\ln p_\checkmark(\alpha|x)$.  We combine the two derivatives in \erf{generator} into the operator
\begin{align}
L_\checkmark(x)=\sum_{\alpha\in\{0,\checkmark\}}[\partial_x\ln p_\checkmark(\alpha|x)+L_\alpha(x)]\otimes\op{f_\alpha}{f_\alpha}.
\end{align}
After a little algebra, one finds that
\begin{align}
\partial_x \sigma_{\ssQ\ssA,\checkmark}(x)
=\half\big[L_\checkmark(x)\sigma_{\ssQ\ssA,\checkmark}(x)+\sigma_{\ssQ\ssA,\checkmark}(x)L_\checkmark(x)\big],
\end{align}
which makes $L_\checkmark(x)$ the SLD of $\sigma_{\ssQ\ssA,\checkmark}(x)$.  That the ancilla states $\ket{f_\alpha}$ are orthogonal is the crucial part of the algebra; this is the mathematical expression of the fact that the ancilla stores a record of the selection-measurement outcomes.

The SLD in hand, we can compute the quantum Fisher information $I_{\sigma_{QA,\checkmark}}\!(x)=\Tr{\sigma_{\ssQ\ssA,\checkmark}(x)L_\checkmark^2(x)}$, using
\begin{align}
\sigma_{\ssQ\ssA,\checkmark}(x)&L_\checkmark^2(x)
=\sum_{\alpha\in\{0,\checkmark\}}
p_\checkmark(\alpha|x)\sigma_{\ssQ|\alpha}(x)\nonumber\\
&\times\bigl[\partial_x\ln p_\checkmark(\alpha|x)+L_\alpha(x)\bigr]^2\otimes\op{f_\alpha}{f_\alpha}.
\label{eq:lem}
\end{align}
Here we again use the orthogonality of the states $\ket{f_\alpha}$.  In evaluating the trace to find the Fisher information, the cross terms that come from the square in \erf{eq:lem} vanish because $\Tr{L_\alpha(x)\sigma_{\ssQ|\alpha}(x)}=\Tr{\partial_x\sigma_{\ssQ|\alpha}(x)}=\partial_x\Tr{\sigma_{\ssQ|\alpha}(x)}=0$.  The result is that the quantum Fisher information of the state $\sigma_{\ssQ\ssA,\checkmark}(x)$ is that given in \erf{eq:Fisher1}.  Notice that $L_0=0$ because the state $\sigma_{\ssQ|0}$ is independent of $x$, so $I_{\sigma_{\ssQ|0}}=0$ does not contribute to \erf{eq:Fisher1}.

The quantum Fisher information of $\sigma_{\ssQ\ssA|\checkmark}$ follows from an identical derivation,
\begin{align}\label{eq:Fisher4}
I_{\sigma_{QA|\checkmark}}\!(x)
=I_{\rm cl}[q_\checkmark(\alpha|x)]+\sum_{\alpha\in\checkmark}q_\checkmark(\alpha|x)I_{\sigma_{\ssQ|\alpha}}\!(x).
\end{align}
where $q_\checkmark(\alpha|x)=p(\alpha|x)/p(\checkmark|x)$ is the renormalized probability of the favorable outcomes.  Straightforward manipulation of $I_{\rm cl}[q_\checkmark(\alpha|x)]$ leads to \erf{eq:Fisher2}.
\end{proof}

We are now ready to state our main result, which generalizes inequalities derived by Tanaka and Yamamoto~\cite{TanYam2013} and Ferrie and Combes~\cite{FerCom2013}.

\begin{thrm}
The quantum Fisher informations of the states introduced above satisfy
\begin{align}
I_{\rho_Q}(x)
\ge I_{\sigma_{QA}}(x)
\ge I_{\sigma_{QA,\checkmark}}(x)
\ge p(\checkmark|x)I_{\sigma_{QA|\checkmark}}(x).
\label{eq:theorem}
\end{align}
When the favorable set $\checkmark$ contains only a single outcome~$\alpha$, the final entry in the chain becomes $p(\alpha|x)I_{\sigma_{Q|\alpha}}(x)$.
\end{thrm}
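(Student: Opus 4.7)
The plan is to establish the three inequalities in (\ref{eq:theorem}) one at a time, leaning on the Lemma's explicit formulas for the two that split cleanly into classical and quantum contributions, and invoking monotonicity of the quantum Fisher information under CPTP maps for the first one.

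\emph{First inequality, $I_{\rho_\ssQ}(x)\ge I_{\sigma_{\ssQ\ssA}}(x)$.} The construction in Sec.~\ref{sec:ancilla} writes $\sigma_{\ssQ\ssA}(x)=\sG\bigl[U(\rho_\ssQ(x)\otimes\rho_\ssA)U^\dagger\bigr]$, with $\rho_\ssA$ independent of $x$. Tensoring with an $x$-independent ancilla and conjugating by a fixed unitary both preserve the quantum Fisher information, since they are reversible. The map $\sG$ of \erf{eq:sG} is a genuine CPTP map, and the quantum Fisher information is monotone nonincreasing under any CPTP map---this follows straight from the operational definition~(\ref{classical to quantum fisher}), because every POVM on the output of $\sG$ pulls back to a POVM (composed with $\sG$) on its input with identical outcome statistics, so the sup over output POVMs cannot exceed the sup over input POVMs. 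Chaining these three observations gives the first inequality.

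\emph{Second inequality, $I_{\sigma_{\ssQ\ssA}}(x)\ge I_{\sigma_{\ssQ\ssA,\checkmark}}(x)$.} I would compare the Lemma's formulas (\ref{eq:Fisher3}) and (\ref{eq:Fisher1}). The quantum piece $\sum_\alpha p(\alpha|x)I_{\sigma_{\ssQ|\alpha}}(x)$ in the first is restricted to $\alpha\in\checkmark$ in the second, and the discarded terms are nonnegative. For the classical piece, $p_\checkmark(\alpha|x)$ is the coarse-graining of $p(\alpha|x)$ obtained by lumping every unfavorable outcome into a single symbol ``$0$''; since coarse-graining is a Markov (stochastic) map on the outcome alphabet, the classical data-processing inequality for Fisher information gives $I_{\rm cl}[p(\alpha|x)]\ge I_{\rm cl}[p_\checkmark(\alpha|x)]$. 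Adding the two pieces yields the bound.

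\emph{Third inequality and special case.} The bound $I_{\sigma_{\ssQ\ssA,\checkmark}}(x)\ge p(\checkmark|x)I_{\sigma_{\ssQ\ssA|\checkmark}}(x)$ is immediate from \erf{eq:Fisher2} because $I_{\rm cl}[p(\checkmark|x),p(\dingcross|x)]\ge 0$. If $\checkmark=\{\alpha\}$ consists of a single outcome, the renormalized conditional $q_\checkmark$ appearing in \erf{eq:Fisher4} is deterministic, so $I_{\rm cl}[q_\checkmark]=0$ and $I_{\sigma_{\ssQ\ssA|\checkmark}}(x)=I_{\sigma_{\ssQ|\alpha}}(x)$; hence $p(\checkmark|x)I_{\sigma_{\ssQ\ssA|\checkmark}}(x)=p(\alpha|x)I_{\sigma_{\ssQ|\alpha}}(x)$, as claimed. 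The only substantive step is monotonicity of the quantum Fisher information under CPTP maps, which powers the first inequality; I expect this to be the main obstacle only in the sense of needing the brief one-line justification via \erf{classical to quantum fisher} sketched above. Everything else is bookkeeping on the explicit expressions already supplied by the Lemma.
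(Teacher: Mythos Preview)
Your proposal is correct and follows essentially the same route as the paper: the first inequality via the operational characterization~(\ref{classical to quantum fisher}) (you phrase it as monotonicity of the quantum Fisher information under the CPTP map $\sG$, the paper as ``the selection measurement followed by optimal extraction is either optimal or suboptimal''---same content), the second by comparing the Lemma's formulas~(\ref{eq:Fisher1}) and~(\ref{eq:Fisher3}) together with the classical data-processing inequality for lumping outcomes, the third directly from~(\ref{eq:Fisher2}), and the single-outcome case from~(\ref{eq:Fisher4}). Your write-up is slightly more explicit about the intermediate $x$-independent-ancilla and unitary steps before $\sG$, but the substance is identical.
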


\begin{proof}
The first inequality follows from \erf{classical to quantum fisher}: either the selection measurement followed by optimal extraction of information about $x$ from $\sigma_{\ssQ\ssA}(x)$ is optimal, or (more likely) it is suboptimal; either way, the inequality holds.  The second inequality follows from a similar optimality argument or directly from comparing Eqs.~(\ref{eq:Fisher1}) and~(\ref{eq:Fisher3}), with the additional fact that lumping classical alternatives together, as in lumping the unfavorable outcomes together, cannot increase the classical Fisher information~\cite{note}.  The third inequality is an immediate consequence of \erf{eq:Fisher2}, and the final sentence is confirmed by \erf{eq:Fisher4}.
\end{proof}

The chain of inequalities in the Theorem, which is summarized in \frf{fig}, shows that probabilistic metrology cannot beat fundamental metrological quantum limits.  The first inequality says that unless the selection measurement and subsequent measurements on the post-selected states are optimal, they cannot do as well as the optimal measurement.  The second inequality says that discarding the information in the post-selected states for unfavorable outcomes cannot improve the quantum limit on estimating~$x$.  The third inequality says that discarding entirely the post-selected states for the unfavorable outcomes cannot improve quantum limits, although the presence of the success probability $p(\checkmark|x)$ requires further discussion, which we give in \srf{sec:results_2}.

Much ink is spilt here in formulating and stating our results precisely, but the results enshrined in the Theorem arise from two principles, which we hold to be self-evident:
\begin{enumerate}
\item[P1:] A suboptimal strategy cannot achieve optimal performance (this is the message of the first \hbox{inequality});
\item[P2:] Information cannot be increased by throwing some of it away (this is the essence of the second and third inequalities).
\end{enumerate}
These two principles should inform thinking about metrology even before things are spelled out precisely. Much has been made of the possibility that one might trade optimality for other practical advantages~\cite{JorMarHow13}.  Indeed, there might be practical advantages to discarding data (such as reduced data processing time), but generally data should be discarded only when it contains no useful information.

The last inequality in \erf{eq:theorem} bears further consideration because the state $\sigma_{\ssQ\ssA|\checkmark}(x)$ is closest to the idea behind probabilistic metrology and because of the presence of the probability $p(\checkmark|x)$ for the favorable outcomes in the inequality.  The Fisher information of the post-selected state $\sigma_{\ssQ\ssA|\checkmark}(x)$ can be larger than the Fisher information of $\rho_\ssQ(x)$---it is such favorable outcomes that one hopes to exploit in probabilistic metrology---but the success probability $p(\checkmark|x)$ says that this strategy can only work in the sense of a bet.  Next we turn our attention to the question of how often this bet can pay off.

\section{Statistical inequalities for gambling on favorable outcomes}\label{sec:results_2}
We can analyze the scenario above formally by imagining a sequence of $N$ independent trials, with $N_\checkmark$ trials having a favorable outcome and, for these favorable trials, a subsequent use of the state~$\sigma_{\ssQ\ssA|\checkmark}(x)$ to estimate~$x$.  The QCRB can be written as
\begin{equation}\label{eq:Fisherbound2}
{\rm MSE}(x) \geq \frac{1}{N_\checkmark I_{\sigma_{QA|\checkmark}}(x)},
\end{equation}
where $N_\checkmark$ is a random variable.  Since $\En{N_\checkmark}=Np(\checkmark|x)$, the last inequality in \erf{eq:theorem} enforces the bound~(\ref{eq:Fisherbound2}) on average or, more precisely, asymptotically for large~$N$. Mightn't it be possible, however, to beat the quantum limit in a finite number of trials that happen to have a large number of favorable outcomes?  Though this can happen, its likelihood can be bounded using standard statistical tools.

Following Ferrie and Combes~\cite{FerCom2013}, what we do is bound the probability that $N_\checkmark I_{\sigma_{QA|\checkmark}}$ exceeds $NI_{\rho_Q}$ or, equivalently, that
\begin{equation}
N_\checkmark\ge \frac{NI_\rho}{I_\sigma}
=\En{N_\checkmark}\frac{I_\rho}{p(\checkmark)I_\sigma}.
\end{equation}
To reduce clutter here and and in the remainder of this section, we omit reference to~$x$, and we use the abbreviations
\begin{equation}
I_\rho=I_{\rho_Q}(x)\quad\mbox{and}\quad I_\sigma=I_{\sigma_{QA|\checkmark}}(x),
\end{equation}
since $\rho_\ssQ$ and $\sigma_{\ssQ\ssA|\checkmark}$ are the only two states involved in the discussion.  We use the \emph{Chernoff bound\/}~\cite{chernoff1952,HoeffdingChernoff1963}, which bounds the probability that a sum, $X$, of random variables, each lying between 0 and 1, is greater than its mean $\mu$ by a factor $\delta\ge1$:
\begin{equation}\label{eq:Cbound1}
\Pr\bigl[X\ge\delta\mu\bigr]\le e^{-\mu(\delta-1)^2/(\delta+1)}.
\end{equation}
For the case at hand, it follows that
\begin{align}
&\Pr\!\left[N_\checkmark I_\sigma\ge NI_\rho\right]\nonumber\\
&\quad=\Pr\!\left[N_\checkmark\ge\frac{NI_\rho}{I_\sigma}\right]
\le\exp\!\left(\!-Np(\checkmark)\frac{(\delta-1)^2}{\delta+1}\right),
\label{eq:bound1}
\end{align}
where $\delta=I_\rho/p(\checkmark)I_\sigma$.  The probability of gaining an advantage from probabilistic metrology is thus exponentially suppressed in the number of trials.

Notice, however, that when $\delta\gg1$, a situation that could easily be encountered and in which we would expect probabilistic metrology to perform poorly, the bound~(\ref{eq:bound1}) becomes
$\Pr\!\left[N_\checkmark I_\sigma\ge NI_\rho\right]
\le\exp(-NI_\rho/I_\sigma)$, which is not at all small when $N\alt I_\rho/I_\sigma$.  Since probabilistic metrology aims to have $I_\sigma\gg I_\rho$, this bound suggests that a probabilistic-metrology protocol might have a high probability of exceeding the QCRB for small numbers of trials.  Our intuition, stemming from the notion that $\delta\gg1$ says that the favorable outcomes contain little of the information in $\rho_\ssQ(x)$, suggests that the bound~(\ref{eq:bound1}) is not very good in these circumstances, and that turns out to be the case.

The task is thus to improve the bound~(\ref{eq:bound1}) to match our intuition, and indeed, the Chernoff bound~(\ref{eq:Cbound1}) is derived from an approximation that works best when $\delta$ is near 1.  The tighter bound, from which \erf{eq:Cbound1} is derived, is
\begin{align}
\Pr\bigl[X\ge\delta\mu\bigr]\le e^{-\mu(1-\delta+\delta\ln\delta)}
=e^{-\mu}(e/\delta)^{\mu\delta}.
\end{align}
Using this bound, we find
\begin{align}
\Pr\!\left[N_\checkmark I_\sigma\ge NI_\rho\right]
\le e^{-Np(\checkmark)}\!\left(\frac{e\,p(\checkmark)I_\sigma}{I_\rho}\right)^{NI_\rho/I_\sigma}.
\end{align}
This bound takes care of the situation described above.  The term in large parentheses, $e/\delta$, is small when $\delta$ is large, and this term is taken to a power that is linear in $N$ and large for all values of $N$ when $I_\rho/I_\sigma\gg1$.

The considerations in this section prompt us to formulate a statistical version of our second principle:
\begin{enumerate}
\item[P2$^\prime$:] Attempts to increase information statistically by discarding information probabilistically are bad bets.
\end{enumerate}
We do, however, caution the reader that since the QCRB can only be achieved for large~$N$, it could be that the favorable-outcome state $\sigma_{\ssQ\ssA|\checkmark}(x)$ converges to its QCRB more rapidly than does the initial state $\rho_\ssQ(x)$.  In this situation, there could be an advantage to post-selection for finite~$N$; establishing such an advantage would require a detailed, case-specific analysis of convergence to the respective QCRBs.

\section{Relationship to prior work}\label{sec:prior}

\subsection{Sketch of prior Fisher-information analyses}\label{subsec:sketchprior}
The results presented above owe much to prior critical analyses of probabilistic metrology in the literature.  Here we sketch some results of other researchers and put their results in context by describing the inequalities they proved in our notation.

Inspired by the results of Knee {\it et al.}~\cite{Knee2013a} a series of papers~\cite{Knee2013a, Knee2013b,TanYam2013,FerCom2013,PanJiaCom13,ZhaDatWam13} have shown that probabilistic metrology in the context of weak-value amplification is not a statistically useful way to design an experiment and then process the results.  Knee {\it et al.}~\cite{Knee2013a} showed that, for a particular estimation problem involving initial pure product states of two qubits and two-outcome projective measurements, the quantum Fisher information obeys the inequality  $I_{\psi_Q}(x)\ge p(\checkmark|x)I_{\psi_{QA|\checkmark}}(x)$, where $\psi$ denotes pure states and where there is only a single outcome in the favorable outcome set.  Tanaka and Yamamoto~\cite{TanYam2013} proved this inequality for any pair of quantum systems that begin in a pure product state and interact via any interaction Hamiltonian.  Ferrie and Combes~\cite{FerCom2013} generalized the inequality of Tanaka and Yamamoto to a double inequality that includes the mixed state $\sigma_{\ssQ\ssA}(x)$, i.e., $I_{\psi_Q}(x)\ge I_{ \sigma_{QA}}(x)\ge p(\checkmark|x)I_{\psi_{QA|\checkmark}}(x)$; in this inequality, the state $\sigma_{\ssQ\ssA}(x)$ is the ensemble of post-selected pure states, and the second inequality explicitly includes contributions from the classical Fisher information of the selection-measurement probabilities.  Ferrie and Combes also considered the presence of arbitrary Gaussian technical noise on the input state, thus demonstrating that the derived inequalities are true even in the presence of such technical noise.   They also used the Chernoff bound, in a special case of the analysis in Sec.~\ref{sec:results_2}, to find the consequences of the Fisher-information inequalities for a finite number of trials.  Recently, Zhang, Datta and Walmsley~\cite{ZhaDatWam13} have independently proved a special case of the Ferrie-Combes inequality and illustrated it with several examples, and Knee and Gauger~\cite{Knee2013b} have shown, using a Fisher-information analysis, that if there is technical noise on the detector, weak-value amplification offers no advantage for overcoming such technical imperfections.

In this paper we have put the analysis of probabilistic metrology on a firm, general footing by using the physically motivated ancilla model.  We generalized prior Fisher-information inequalities to the set of inequalities in Eq.~(\ref{eq:Fisher2}), which apply to all mixed input states, thus including the effect of any technical noise at the input, and to all possible quantum operations for the selection measurement.  This means that the analysis covers all protocols for probabilistic metrology in which MSE is the performance metric; in particular, this encompasses all versions of probabilistic metrology that use weak-value amplification, regardless of whether the defined weak values are real or imaginary.

In the remainder of this section, we formulate three desiderata for analyses of probabilistic quantum metrology.  Then we show how the prior work on using weak-value amplification for metrology and on metrology with abstention is related to the desiderata and to our analysis.

\subsection{Desiderata for probabilistic metrology}\label{subsec:desiderata}

We find it useful to formulate three desiderata for analyses that assess the utility of probabilistic protocols for quantum metrology:

\begin{enumerate}
\item[D1:] Choose a performance metric, and apply it uniformly to all data.
\item[D2:] Include the success probability correctly in the analysis.  This should happen automatically if the problem is set up properly.  Assessing the effect of the success probability might require the sort of statistical analysis given in \srf{sec:results_2}.
\item[D3:] Compare the performance of probabilistic protocols with deterministic protocols, using the same performance metric for all cases.  If possible, compare with the optimal deterministic protocol, which sets a quantum limit on estimation as measured by the chosen performance metric.
\end{enumerate}

\noindent
The analysis we present in Secs.~\ref{sec:parest}--\ref{sec:results_2} adheres to these desiderata by considering single-parameter estimation, with MSE as the performance metric and the corresponding quantum Cram{\'e}r-Rao bound setting the quantum limit on achievable \hbox{MSE}.  The string of inequalities in our Theorem automatically includes the probabilities for favorable outcomes and the overall success probability $p(\checkmark|x)$ in just the right way for comparing probabilistic and deterministic strategies.  In contrast, in much previous work, the relative scarcity of the favorable outcomes and/or the failure to garner any information about the parameters from the discarded outcomes is not fully included in the analysis, thus making it difficult to judge the impact of discarding outcomes.

Though our analysis provides a model for studies of probabilistic quantum metrology, it does not apply directly to much of the previous work for two reasons: some previous work considers multi-parameter estimation, and much of it uses a performance metric other than \hbox{MSE}.  We now take a brief look at some of this previous work to identify problems in the analysis.

\subsection{Weak-value amplification}\label{subsec:weakvalueamp}

Analyses of weak-value amplification~\cite{KofAshNor12,DreMalFil13,JorMarHow13} typically use signal-to-noise ratio (SNR), instead of MSE, as the performance metric and consider a probabilistic protocol successful if the SNR increases on post-selection.  We note that there is no good reason to prefer SNR over MSE, since in the case of amplification, MSE already includes the effects of gain that SNR is meant to capture; moreover, an improvement in SNR does not necessarily imply an improvement in MSE~\cite{FerCom2013}.

Most importantly, as was discussed in~\cite{PanJiaCom13}, the relevant metric for assessing nondeterministic protocols is not the post-selected SNR, which fails to include the success probability $p(\checkmark)$, but rather the root-probability-SNR product, $\sqrt{p(\checkmark)}\times\mbox{SNR}$.  SNR increases as the square root of the number of trials, so when trials proceed to an estimate only on favorable outcomes, the effective number of trials on average is reduced to $p(\checkmark)N$; thus proper accounting requires including $\sqrt{p(\checkmark)}$ in the performance metric.  As in the analysis of nondeterministic immaculate amplifiers in~\cite{PanJiaCom13}, it seems likely that the root-probability-SNR product will show that weak-value amplification does not improve the ability to detect weak signals, in accordance with the related results reported in~\cite{TanYam2013,FerCom2013,Knee2013a,Knee2013b}.

It has been argued that even when the fundamental results presented here and in~\cite{TanYam2013,FerCom2013,Knee2013a,Knee2013b} hold, the situation for weak-value amplification changes when ``technical noise'' is included (see, e.g.,~\cite{JorMarHow13}).  We have yet to see any convincing evidence of this claim because success probability is not properly included in the analysis.  As we noted in \srf{sec:ancilla}, our analysis already includes the effects of technical noise at the input.  Any other technical noise is noise in the measurements and can be regarded as arising from a restriction that prevents the optimal measurement from being performed.  Thus, even were it true that weak-value amplification has advantages in the case of such output technical noise, it would mean that the advantages have nothing to do with fundamental quantum limits and should not be viewed as addressing fundamental questions of quantum mechanics.

\subsection{Metrology with abstention}\label{subsec:abstention}

Protocols for metrology with abstention~\cite{GenRonCal2012,GenRonCal2013a,GenRonCal2013b} have used mean fidelity as the performance metric.  We focus on the protocol considered in~\cite{GenRonCal2012}, which seeks to estimate the direction $\ntrue$ of the Bloch vector of $N$ qubits all of which are in the state $\rho=\frac12(\Id+r\ntrue\cdot\bm{\sigma})$.  The chosen performance metric is the fidelity of the pure qubit state corresponding to the estimate $\nest$ with the pure qubit state corresponding to the true direction $\ntrue$, i.e.,
\begin{equation}
F(\nest,\ntrue)=|\ip{\nest}{\ntrue}|^2=\frac12(1+\nest\cdot\ntrue);
\end{equation}
this fidelity is averaged over the prior distribution of $\ntrue$ and over the measurement results that lead to $\nest$.  If the prior for $\ntrue$ is uniform on the Bloch sphere, the optimal measurement is the covariant measurement, i.e., the measurement that is invariant under simultaneous rotations of the qubits.  This covariant measurement is block diagonal in the angular-momentum subspaces, which have total angular momentum in the range $j=j_{\rm min},\ldots,J$, where $j_{\rm min}=N\,{\rm mod}\,2/2$ and $J=N/2$.  We let $\xi_j$ denote an arbitrary angular-momentum subspace with angular momentum $j$; the number of such subspaces, i.e., the multiplicity of irreducible representations with angular momentum~$j$, is given in Eq.~(11) of~\cite{GenRonCal2012}.  In each angular-momentum subspace, the covariant measurement is a measurement in the basis of angular-momentum coherent states, which are specified by a spin direction; the estimate of spin direction is the result of the measurement.

As shown in~\cite{GenRonCal2012}, the mean fidelity is
\begin{align}
\overline F
&=\int\frac{d\ntrue}{4\pi}\,d\nest\,F(\nest,\ntrue)p(\nest|\ntrue)\nonumber\\
&=\int d\ntrue\,F(\bm{e}_z,\ntrue)p(\bm{e}_z|\ntrue)\nonumber\\
&=\sum_j\int d\ntrue\,F(\bm{e}_z,\ntrue)p(\bm{e}_z|j,\ntrue)p(j|\ntrue).
\end{align}
The second step here follows from the covariance of the measurement, which allows us to pick any direction for the estimate, here $\bm{e}_z$, as long as we integrate over the uniform prior for $\ntrue$.  The same symmetry under rotations implies that $p(j|\ntrue)=p_j$, the probability to find the $N$ qubits with angular momentum~$j$, is independent of $\ntrue$.  The final result for the average fidelity is
\begin{align}
\overline F=\sum_j p_j\overline F_j,
\label{eq:avF}
\end{align}
where
\begin{align}
\overline F_j=\int d\ntrue\,F(\bm{e}_z,\ntrue)p(\bm{e}_z|j,\ntrue)
\end{align}
is the average fidelity for angular momentum~$j$.  Since the fidelity can be thought of as the probability that $\nest$ matches $\ntrue$, the $j$th term in the sum~(\ref{eq:avF}) can be thought of as the probability to get the outcome $j$ times the probability of a match given the outcome~$j$; the overall fidelity is obtained by summing over~$j$.

The abstention protocol regards the identification of total angular momentum~$j$ as a selection measurement; the subsequent identification of a particular subspace within $j$ and the coherent-state measurement in that subspace complete the measurement required to give an estimate.  Since $\overline F_j$ increases with $j$ (because the bigger $j$, the more well-defined the spin direction), the favorable outcomes are chosen to be those whose total angular momentum exceeds a threshold $j_*$, i.e., $\checkmark=\{j_*+1,\ldots,J\}$.  The favorable outcomes have overall probability and average fidelity
\begin{align}
p(\checkmark)=\sum_{j=j_*+1}^J p_j,\qquad
\overline F_\checkmark=\sum_{j=j_*+1}^J\frac{p_j}{p(\checkmark)}\overline F_j;
\end{align}
similarly, the unfavorable outcomes have overall probability and average fidelity
\begin{align}
p(\dingcross)=\sum_{j=j_{\rm min}}^{j_*} p_j,\qquad
\overline F_\dingcrossscript=\sum_{j=j_{\rm min}}^{j_*}\frac{p_j}{p(\dingcross)}\overline F_j.
\end{align}
Since $\overline F_j$ increases with $j$, it is clear that $\overline F_\dingcrossscript<\overline F_\checkmark$.

We can now write a string of inequalities reminiscent of those for quantum Fisher information in \erf{eq:theorem}:
\begin{align}
\overline F=p(\dingcross)\overline F_\dingcrossscript+p(\checkmark)\overline F_\checkmark
\ge\frac12 p(\dingcross)+p(\checkmark)\overline F_\checkmark
\ge p(\checkmark)\overline F_\checkmark.
\label{eq:Fineq}
\end{align}
Given that $\overline F_\dingcrossscript<\overline F_\checkmark$, the first equality says that $\overline F<\overline F_\checkmark$, i.e., that the post-selected averaged fidelity is bigger than the unconditioned average fidelity.  It is this increase in post-selected average fidelity that is reported as the advantage of abstention metrology in~\cite{GenRonCal2012}.  The two inequalities that complete \erf{eq:Fineq}, both of which correspond to discarding information, indicate why this advantage is not useful.  The first inequality says that the average fidelity decreases if one guesses a random spin direction in the event of an unfavorable outcome (random guesses have average fidelity of $1/2$), and the second says that the average fidelity decreases further if one refuses to give an estimate for unfavorable outcomes.  Note that if \erf{eq:Fineq} is divided by $p(\checkmark)$, as is typically done in the literature, the inequalities still hold.

If mean fidelity is the performance metric, the post-selected average fidelity must be multiplied by the probability of a favorable outcome.  To put it a bit differently, in the case of post-selection, the average probability that the estimate matches the true spin direction must include the probability of having the opportunity to make an estimate.

One could repeat the $N$-qubit protocol many times $M$ in the hope that there would be so many favorable outcomes $M_\checkmark$ that $M_\checkmark\overline F_\checkmark>M\overline F$ or, equivalently, that $M_\checkmark>M\overline F/\overline F_\checkmark=\mathbb{E}[M_\checkmark]\overline F/p(\checkmark)\overline F_\checkmark$.  This clearly doesn't work out on average, and the hope can be dashed using the statistical techniques employed in \srf{sec:results_2}, which show that the probability of this happy occurrence decreases exponentially with $M$.

Although the argument we give here is couched in terms of the abstention protocol considered in~\cite{GenRonCal2012}, the same ideas and analysis are easily generalized to any probabilistic protocol that uses fidelity as the performance metric.  The key point is that the probability of favorable outcomes must be included in the post-selected average fidelity.

\section{Discussion}\label{sec:discussion}
Our chief objective in this paper has been to give a rigorous account of quantum limits on probabilistic metrology.  Specifically, we have shown that the quantum Fisher information weighted by the success probability does not increase under post-selection; thus probabilistic metrology cannot improve the quantum limit on the accuracy for estimating a single parameter.  The quantum Fisher information is relevant because we use a quadratic loss function, the MSE of our estimate, as our performance metric; the QCRB tells us that an unbiased estimator can achieve a MSE that is the inverse of the quantum Fisher information.

It remains possible, however, that our conclusion might not hold for other performance metrics applied to other parameter-estimation problems.  We conclude now by briefly considering other possibilities, but caution the reader that to reach a different conclusion about probabilistic metrology requires violating one of the two principles we enunciated in \srf{sec:results_1}. Neither of these principles seems likely to go away.

One approach might be to include a fixed cost for obtaining an unfavorable outcome from the selection measurement.  Additionally (or alternatively), one might use a loss function that penalizes deviations of $\hat x$ from $x_{\rm true}$ more heavily than does a quadratic loss function.  A power-law loss function such as $\En{|\hat x-x_{\rm true}|^n}$ might do that, and the resulting penalty might prejudice one to use states that provide high sensitivity to changes in the parameter.

Though it is conceivable that exotic loss functions or other performance metrics might avoid our negative conclusions about probabilistic quantum metrology, the conjurer of any such function faces three tasks before reporting back to the community.  The first task is provide a detailed account of what parameter-estimation problem the performance metric corresponds to.  The second is to determine, if possible, the ultimate quantum limit---the analogue of the QCRB---on performance in terms of the new metric.  The third is to analyze rigorously the performance of probabilistic protocols as expressed by the new metric, including the effect of success probability in the analysis.  Results without the context that comes from performing these tasks have little call on the attention of those who actually face quantum limitations on measurement precision.

\acknowledgments

This work was supported in part by National Science Foundation Grant Nos.~PHY-1212445 and~PHY-1314763, by Office of Naval Research Grant No.~N00014-11-1-0082.  CF was supported by the Canadian Government through the NSERC PDF program.

\end{document}